\newtheorem{theorem}{Theorem}[section]
\newtheorem*{definition}{Definition}
\newtheorem{lemma}[theorem]{Lemma}
\newtheorem{proposition}[theorem]{Proposition}
\numberwithin{equation}{section}
\def\cB{{\cal B}}
\def\cD{{\cal D}}
\def\cG{{\cal G}}
\def\cH{{\cal H}}
\def\cK{{\cal K}}
\def\cL{{\cal L}}
\def\cM{{\cal M}}
\def\cN{{\cal N}}
\def\cO{{\cal O}}
\def\cV{{\cal V}}
\def\fA{{\mathfrak A}}
\def\fR{{\mathfrak R}}
\def\fS{{\mathfrak S}}
\def\RR{{\mathbb R}}
\newcommand{\bv}{{\bm v}}
\def\eg{\textit{e.g.\ }}
\def\ie{\textit{i.e.\ }}
\newcommand{\ad}[1]{\mbox{ad}  \, #1}
\newcommand{\be}{\begin{equation}}
\newcommand{\ee}{\end{equation}}
\begin{document}

\title{\hspace*{-2mm}
  \mbox{Arrow of time and quantum physics} \\[4mm]
\large Dedicated to Roberto Longo on the occasion of his 70th birthday 
}
\author{\large Detlev Buchholz${}^{(1)}$ \ and 
\ Klaus Fredenhagen${}^{(2)}$ \\[5mm]
\small 
${}^{(1)}$ Mathematisches Institut, Universit\"at G\"ottingen, \\
\small Bunsenstr.\ 3-5, 37073 G\"ottingen, Germany\\[5pt]
\small
${}^{(2)}$ 
II. Institut f\"ur Theoretische Physik, Universit\"at Hamburg \\
\small Luruper Chaussee 149, 22761 Hamburg, Germany \\
}
\date{}

\maketitle

\begin{abstract}
  \noindent Based on the hypothesis that the
  (non-reversible) arrow
  of time is intrinsic in any system, no matter
  how small, the consequences
  are discussed. Within the framework of local quantum physics 
  it is shown how such a semi-group action of time can consistently
  be extended to that of the group of spacetime translations
  in Minkowski space. In presence of massless excitations, 
  however, there arise ambiguities in
  the theoretical extensions of the 
  time translations to the past. The corresponding 
  loss of quantum information on states upon time is determined.
  Finally, it is explained how the description of operations in
  classical terms combined with constraints
  imposed by the arrow of time leads to a quantum theoretical framework.
  These results suggest that the arrow of time is 
  fundamental in nature and not merely a consequence of statistical effects 
  on which the Second Law is based. 
\end{abstract}

\newpage
\section{Introduction}
\label{sec1}
\setcounter{equation}{0}

The \textit{arrow of time} is a subject of continuing discussions
ever since this
term was coined by Eddington \cite{Ed}. In brief, this topic can be
described as follows: the time parameter that enters into the fundamental
equations of physics can be reversed, which in principle seems to
allow physical systems to move backwards in time.
On the other hand, there is overwhelming evidence that this does not
happen in the real world. The standard resolution of this
apparent clash between
theory and reality is based on the argument that such time reversed
processes are exceedingly unlikely (Second Law). Therefore, they were
and will never be observed. 

It is the aim of the present article to propose an alternative view on
this topic.  We will discuss the hypothesis that the direction of time
is inherent in all systems, no matter how small. The evolution of
time thus has to be described by a semigroup, there is no inverse.
More concretely, consider an observer located
at a particular moment at point $o$ in
Minkowski space $\cM$.
As time proceeds, such observers and all their offsprings will enter
the forward lightcone $\cV_o \coloneqq o + V_+ \subset \cM$,
where $V_+$ is the semigroup of all positive timelike translations.
Observers at $o$ can neither re-enter the
past lightcone $\cV_{o \, -}$, 
where they came from, nor the spacelike complement of $o$;
it is  also not possible for them to send any 
instruments there. We denote this non-accessible region by $\cN_o$.

Information obtained in the past is encoded in material bodies
that accompany any observer. It is laid down in books or stored in other
devices, not least in our brains. They contain information about past
events, observations, experiments, the resulting data, and the
theories developed on their basis. The amount of this kind of
information grows steadily with time. Irrespective of their truth
value, these informations can be treated as factual
and described in classical terms (ordinary language).
It is impossible to go backwards in time
in order to verify their initial quantum features.
Any setup for future experiments is described in such classical
terms if they have been proven to work in the past. In 
experiments quantum effects can be explored, re-examined and
confirmed. But with the registration of the results these
become part of the past and therefore facts in the above sense.

Local quantum physics \cite{Ha} allows one to make statistical predictions
for future measuring results. It is even more important 
in the present context that, on the basis of theory, one can return to the past 
and develop meaningful scenarios of what has happened there and
elsewhere. It turns out, however, that back-calculations in
time do not allow unambiguous statements about the past in general.
These facts suggest to explore in more detail
the consequences of the hypothesis
that time has no inverse. There appear the following questions:
\begin{itemize}
\item[(I)] Is the hypothesis of an intrinsic arrow of time compatible
  with the successful theoretical treatment of time as a group?
\item[(II)] What are the uncertainties in the theoretical
  description of the past that arise from this hypothesis 
  and how do they manifest themselves?
\item[(III)] How big is the 
  loss of information on the properties of 
  states over time that arises from these uncertainties? 
\item[(IV)] Does the arrow of time enforce the quantum 
  features of operations that are described by classical
  concepts?
\end{itemize}    

In the present article we will provide answers to these questions.
Our arguments are based on previous work 
and some measure for the information contained in 
states that was recently invented by Roberto Longo
et al.\ \cite{CiLoRaRu}. We will be led
to the conclusion that the hypothesis of a fundamental arrow
of time is not only meaningful but also leads to the resolution
of some theoretical puzzles. 

Our article is organized as follows.
In Sect.\ 2, we introduce our framework 
and recall from \cite{BuRo} how to identify ground states (vacua) by
observables in a given lightcone $\cV_o$, together
with the semigroup $V_+$ of
time translations. Any such state leads to a unitary representation
of the semigroup. It can be extended to a representation
of the group of spacetime translations on $\cM$ that allows one to 
perform computations backwards in time. Section 3 contains a
discussion of the ambiguities concerning the past
which arise in these extensions and a proof that
they are related to states of arbitrarily small mass. 
In Sect.\ 4 we use the measure of information
introduced in \cite{CiLoRaRu} and show that the information which
can be extracted from such states by observables in lightcones
decreases monotonically in time. 
We then recall in Sect.~5 an approach to local quantum 
physics which proceeds from the concept of
operations \cite{BuFr2}. These operations are described in terms
of classical physics and are subject to causal relations  
that incorporate the arrow of time. Without imposing any
quantization rules from the outset, this approach gives rise
to genuine quantum theories. It shows that 
the arrow of time entails the non-commutative structures of
quantum physics. So this arrow may be regarded as a fundamental
property, complementing its statistical explanations,
based on the second law. 


\section{ \hspace*{-5mm} Time as a semigroup and its extension to a group}
\label{sec2}
\setcounter{equation}{0}

In this section we introduce our framework and explain 
the construction of a unitary representation
of the semigroup $V_+$ which induces the action
of time translations on the observables in a
given lightcone. We then exhibit an extension of this 
representation to the group of all spacetime translations
$\RR^4$. It determines lightcone algebras all over 
Minkowski space and acts covariantly (geometrically) on them.
This provides an affirmative answer to question~(I). 

In order to justify our input let us begin with
a brief remark. At each moment,
there exists a multitude of observers at points
$o_1, \dots, o_n \in \cM$ who follow their momentary 
arrow of time, depending on their velocity. The union of
their forward light cones is contained in some bigger
lightcone with apex $o$ in their pasts. If they
agree on this point, they can set their clocks
to $0$ there. For the present discussion it implies that
it suffices to consider a single lightcone
${\mathcal V}_o \coloneqq o + V_+$~and~the~semigroup
of time translations
$V_+ =
\{ \tau \coloneqq (t,t \, \bv) : t \geq 0, \, | \bv | < 1 \}$ acting on it. 

Let $\cV_o$ be given and let $\cV \mapsto \fA(\cV)$
be the respective (isotonous) net of lightcone subalgebras, 
$\cV \subset \cV_0 \,$; their local substructure will be used 
later. We assume that these algebras are
unital C*-algebras. The action of the time translations
$\tau \in V_+$ on these algebras is induced by morphisms $\alpha_\tau$,
\be \label{e.2.1} 
\alpha_\tau (\fA(\cV_o)) = \fA(\cV_{o + \tau})
\subset  \fA(\cV_o) \, , \quad \tau \in V_+ \, .
\ee
Composing them yields 
$\alpha_{\tau_1} \alpha_{\tau_2} = \alpha_{\tau_1 + \tau_2}$
for $\tau_1, \tau_2 \in V_+$. 

Turning to the states, experience shows that
ensembles with given properties
can repeatedly be prepared. It suggests that there is some
stationary background state which permits these
operations, \eg an equilibrium state 
or a ground state. We discuss here the latter scenario  
and use the following definition, where assumptions
made in \cite[Sect.\ 3]{BuRo} are slightly weakened. 
\begin{definition}
  Let $\fA(\cV_o)$ be given. A state $\omega_{ 0}$ on this algebra
  is said to be a ground state (vacuum) for all inertial observers in
  $\cV_o$ if it satisfies the following conditions. \\[-10mm]
  \begin{itemize}
  \item[(a)] $\omega_0 \, \alpha_\tau = \omega_0$ \ for \ $\tau \in V_+$. 
  \item[(b)] $\tau \mapsto \omega_0(A^* \alpha_\tau(B))$
    is continuous for $A,B \in \fA(\cV_o)$.
  \item[(c)] The functions   $\tau \mapsto \omega_0(A^* \alpha_\tau(B))$ 
    extend continuously to the complex domain $V_+ + i \, V_+$ and are
    analytic in its interior. Their modulus is bounded on this
    domain by $\sqrt{\omega_0(A^*A) \omega_0(B^*B)}$
    for $A,B \in \fA(\cV_o)$. 
  \end{itemize}  
\end{definition} 
\noindent \textbf{Remark:} These properties can in principle be tested
by observers in $\cV_o$. 

Vacuum states on $\fA(\cV_o)$
determine a continuous unitary representation of the semigroup
$V_+$ in the corresponding GNS-representation. It extends to
a representation of the spacetime translations $\RR^4$
on the net of all lightcone algebras. 
We add here to results given in \cite[Sect.\ 3]{BuRo}.

\begin{proposition} \label{p.2.1}
  Let $\fA(\cV_o)$ be given, let $\omega_0$ be a vacuum state on
  this algebra, and let $(\pi_0, \Omega_0, \cH_0)$ be the
  corresponding GNS-representation. 
  \begin{itemize}
  \item[(i)] The vector $\Omega_0$ is cyclic for each
    algebra $\pi_0(\fA_0(\cV_{o + \tau}))$, $\tau \in V_+$. 
  \item[(ii)] There exists a continuous unitary representation
    $\tau \mapsto U_0(\tau)$ 
    on $\cH_0$ that implements the action of the
    semigroup $V_+$ on $\fA(\cV_o)$,
    \be \label{e.2.2} 
    \ad{U_0(\tau)} ( \pi_0(A) ) =  \pi_0(\alpha_\tau(A)) \, , \quad
    A \in \fA(\cV_o) \, .
    \ee
    It leaves the representing vector invariant,
    $U_0(\tau) \Omega_0 = \Omega_0$, $ \tau \in V_+$. This
    unitary representation is unique. 
  \item[(iii)] The representation $U_0$ of $V_+$ can be extended to
    a continuous unitary representation $U$ of the
    group $\RR^4$ of spacetime translations on $\cM$. Its adjoint
    action on the given algebra
    defines a net of lightcone algebras on $\cM$
    on which it acts covariantly (geometrically). 
    Moreover, $U$ satisfies the 
    relativistic spectrum condition and leaves
    the vector $\Omega_0$ invariant. 
  \end{itemize}  
  \end{proposition}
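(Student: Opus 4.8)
The plan is to handle the three parts in the order: construct the candidate operators, establish the cyclicity~(i), and then read off unitarity~(ii) and the extension~(iii). First I would build $U_0$ directly from the GNS data. Since each $\alpha_\tau$ is a unital $*$-morphism and $\omega_0$ is invariant by~(a), the prescription $U_0(\tau)\,\pi_0(A)\Omega_0 := \pi_0(\alpha_\tau(A))\Omega_0$ is well defined and isometric on the dense domain $\pi_0(\fA(\cV_o))\Omega_0$, because $\|\pi_0(\alpha_\tau(A))\Omega_0\|^2 = \omega_0(\alpha_\tau(A^*A)) = \omega_0(A^*A)$. It therefore extends to an isometry on $\cH_0$; it fixes $\Omega_0 = \pi_0(\mathbf 1)\Omega_0$, satisfies $U_0(\tau_1)U_0(\tau_2)=U_0(\tau_1+\tau_2)$ and the implementation~\eqref{e.2.2} on the domain, and is strongly continuous by~(b) (for isometries weak continuity already gives strong continuity). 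The only nontrivial point is surjectivity: the range of $U_0(\tau)$ is the closure of $\pi_0(\alpha_\tau(\fA(\cV_o)))\Omega_0 = \pi_0(\fA(\cV_{o+\tau}))\Omega_0$, so $U_0(\tau)$ is unitary exactly when $\Omega_0$ is cyclic for $\pi_0(\fA(\cV_{o+\tau}))$. Thus all of~(ii) reduces to~(i), which is the heart of the matter.

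For~(i) I would run a Reeh--Schlieder argument powered by the analyticity~(c). On the dense domain the forms $\omega_0(A^*\alpha_\tau(B))$ continue into $V_+ + iV_+$ with modulus bounded by $\|\pi_0(A)\Omega_0\|\,\|\pi_0(B)\Omega_0\|$, so for each $z$ in this set they define a contraction $U_0(z)$ on $\cH_0$. Consequently, for fixed $A\in\fA(\cV_o)$ and arbitrary $\Psi\in\cH_0$, approximating $\Psi$ in norm by domain vectors exhibits $z\mapsto\langle\Psi,U_0(z)\pi_0(A)\Omega_0\rangle$ as a uniform limit of analytic functions, hence analytic on $\mathrm{int}(V_+ + iV_+)$. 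Now suppose $\Psi\perp\pi_0(\fA(\cV_{o+\tau}))\Omega_0$ and set $F(z)=\langle\Psi,U_0(z)\pi_0(A)\Omega_0\rangle$. For real $\sigma\in\tau+\mathrm{int}\,V_+$ one has $\alpha_\sigma(A)\in\fA(\cV_{o+\sigma})\subset\fA(\cV_{o+\tau})$, so the boundary values $F(\sigma)$ vanish on this open real cone. Since $F$ is analytic in the connected tube $\mathrm{int}\,V_+ + i\,\mathrm{int}\,V_+$ and its boundary values vanish on an open set, a one-variable Schwarz-reflection slice followed by the several-variable identity theorem forces $F\equiv 0$ on the tube; passing to the boundary value $\eta\to 0$ and then using strong continuity at the apex gives $F(0)=\langle\Psi,\pi_0(A)\Omega_0\rangle=0$. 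As $A$ ranges over $\fA(\cV_o)$, the GNS cyclicity of $\Omega_0$ yields $\Psi=0$, which is~(i). I expect this step---upgrading contractivity to genuine analyticity for \emph{all} vectors and then propagating the zero set from $\tau+\mathrm{int}\,V_+$ down to the apex---to be the main obstacle.

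With~(i) in hand each $U_0(\tau)$ has dense range and is therefore unitary, completing~(ii); uniqueness follows since any implementing unitary fixing $\Omega_0$ agrees with $U_0$ on the dense set $\pi_0(\fA(\cV_o))\Omega_0$. For~(iii) I would note that $\{U_0(\tau)\}_{\tau\in V_+}$ is now a strongly continuous unitary representation of the abelian semigroup $V_+$; restricting to rays $s\mapsto U_0(s\tau)$ and invoking Stone's theorem produces commuting self-adjoint generators $P_0,\dots,P_3$, whose joint functional calculus defines $U(a)=e^{\,iP\cdot a}$ for all $a\in\RR^4$ and extends $U_0$. The contractivity of the continued operators $U_0(i\eta)=e^{-P\cdot\eta}$ for $\eta\in V_+$ forces $P\cdot\eta\ge 0$ spectrally for every $\eta\in V_+$, placing the joint spectrum in the dual of $V_+$, i.e.\ the closed forward cone; this is the relativistic spectrum condition, and $U(a)\Omega_0=\Omega_0$ by continuity. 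Finally, setting $\mathfrak B(\cV_{o+a}):=\ad{U(a)}\,(\pi_0(\fA(\cV_o)))$ defines a net of lightcone algebras over all of $\cM$: the group law $U(a)U(b)=U(a+b)$ makes the assignment independent of the chosen translation and yields the geometric action $\ad{U(b)}\,\mathfrak B(\cV_{o+a})=\mathfrak B(\cV_{o+a+b})$, while isotony is inherited from the net on $\cV_o$ transported by the translations.
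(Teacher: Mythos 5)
Parts (i) and (ii) of your proposal are correct and essentially identical to the paper's proof: you build the isometries from property (a) of vacuum states, reduce unitarity to cyclicity of $\Omega_0$ for the shifted algebras, and obtain cyclicity by continuing $\langle\Psi,\pi_0(\alpha_\sigma(A))\Omega_0\rangle$ analytically and propagating its vanishing from the real cone $\tau+V_+$; your ``Schwarz reflection plus identity theorem'' argument is precisely the vanishing form of the edge-of-the-wedge theorem that the paper invokes. In (iii), constructing the extension through Stone generators is a legitimate alternative to the paper's algebraic definition $U(x)\coloneqq U_0(\tau_1)U_0(\tau_2)^{-1}$ for $x=\tau_1-\tau_2$, with the minor caveat that $P_1,P_2,P_3$ are not generators of rays lying in $V_+$, so they must be formed as linear combinations of generators along four linearly independent timelike directions.

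Your proof of the spectrum condition, however, has a genuine gap: the identification $U_0(i\eta)=e^{-P\cdot\eta}$ is asserted, not proved, and it is essentially equivalent to the statement you are trying to establish. All that property (c) provides is a bounded analytic continuation of $s\mapsto\langle\phi,U_0(s\eta)\psi\rangle$ from the half-line $s\geq 0$ into the quarter plane $\{\mathrm{Re}\,z\geq 0,\ \mathrm{Im}\,z\geq 0\}$. That information alone implies neither that the continuation is given by the spectral calculus of the generator, nor that the generator is positive. Concretely, for the Cauchy measure $d\mu(\lambda)=\pi^{-1}(1+\lambda^2)^{-1}d\lambda$ one has $\hat\mu(s)=e^{-|s|}$; its restriction to $s\geq 0$ continues to $e^{-z}$, bounded and analytic on the entire right half-plane (hence on the quarter plane), although $\mu$ is symmetric about the origin and the generator of the associated unitary group is not positive. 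In that example the vector does not even lie in the domain of the would-be operator $e^{-tH}$, so the analytic continuation of the matrix element is \emph{not} a spectral-calculus expression. Thus the inference ``bounded analytic continuation from $[0,\infty)$ into the quarter plane implies positive generator'' is false, and your step would fail as stated.

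What is missing is the translation trick of the paper, equation \eqref{e.2.8}, which is where the invariance of $\omega_0$ (property (a)) enters in an essential way: for arbitrary real $x$, choosing $\tau\in V_+$ with $\tau+x\in V_+$, one has $\langle\pi_0(A)\Omega_0,U(x)\pi_0(B)\Omega_0\rangle=\langle\pi_0(\alpha_\tau(A))\Omega_0,\pi_0(\alpha_{\tau+x}(B))\Omega_0\rangle$, and property (c) applied with $\alpha_\tau(A)$ in place of $A$ then yields analyticity, with the uniform bound $\|\pi_0(A)\Omega_0\|\,\|\pi_0(B)\Omega_0\|$, in the imaginary directions $iV_+$ around \emph{every} real point, hence on the full tube $\RR^4+iV_+$ --- the full upper half-plane along every timelike ray, not merely the quarter plane. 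Only with full-tube analyticity do the standard Laplace-transform arguments confine the spectrum of $U$ to $\overline{V}_+$; note that $e^{-|s|}$ admits no bounded analytic continuation from all of $\RR$ into the upper half-plane, which is exactly how the counterexample above is excluded. Your sketch of (iii) never uses the invariance of $\Omega_0$ beyond noting that $U$ fixes it, and that invariance is precisely the missing ingredient.
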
  
\begin{proof}
(i) According to property~(c) of vacuum states,   
the vector-valued functions $\tau \mapsto \pi_0(\alpha_\tau(A)) \Omega_0$ 
extend continuously to 
functions on the complex domain $V_+ + i\, V_+$
that are analytic in its interior, $A \in \fA(\cV_0)$.
Now if $\Psi \in {\cal H}_0$ is a vector
in the orthogonal complement of
$\pi_0(\alpha_\sigma(\fA(\cV_o)) \Omega_0$
for some $\sigma \in V_+$, it follows from the isotony 
of the net of lightcone algebras
and the covariant action of the time translations that 
$(\Psi, \pi_0(\alpha_\tau(A)) \Omega_0) = 0$ for 
all $\tau \in \sigma + V_+$. The edge-of-the-wedge theorem then implies
$(\Psi, \pi_0(\alpha_\tau(A)) \Omega_0) = 0$ for all $\tau \in V_+$
and hence $(\Psi, \pi_0(A) \Omega_0) = 0$, $A \in \fA(\cV_o)$.
Whence $\Psi = 0$ 
since the GNS vector $\Omega_0$ is cyclic for $\pi_0(\fA(\cV_0))$, 
proving the first statement. 

\noindent (ii)
Making use of property (a) of vacuum states, one can consistently
define isometries $U_0(\tau)$, $\tau \in V_+$, on $\cH_0$, putting
\be \label{e.2.3} 
U_0(\tau) \, \pi_0(A) \Omega_0 \coloneqq \pi_0(\alpha_\tau(A)) \Omega_0 \, ,
\quad A \in \fA(\cV_o) \, .
\ee
By the preceding step, these isometries have a dense
range in $\cH_0$ and hence are unitary. So they induce the
endomorphic action of $V_+$ on the lightcone algebras and
leave the vector $\Omega_0$ invariant. This fixes them uniquely. 
Moreover,  they are weakly continuous according to property (b)
of vacuum states and satisfy 
\be \label{e.2.4} 
U_0(\tau_1) U_0(\tau_2) = U_0(\tau_1 + \tau_2) =
U_0(\tau_2) U_0(\tau_1) \, , \quad \tau_1, \tau_2 \in V_+ \, .
\ee

\noindent (iii) 
For the proof that $U_0$ can be extended to all spacetime translations,
we make use of the fact that any $x \in \RR^4$ can be presented as
difference of elements of $V_+$. 
 So let $x = \tau_1 - \tau_2 = \tau_3 - \tau_4$,  hence
$ \tau_1 + \tau_4 = \tau_2 + \tau_3$. Making use of equation \eqref{e.2.4}
it follows that
$U_0(\tau_1) U_0(\tau_2)^{-1} =  U_0(\tau_3) U_0(\tau_4)^{-1}$.
So the operators
\be \label{e.2.5}
U(x) \coloneqq U_0(\tau_1) U_0(\tau_2)^{-1} \, , \quad x = \tau_1 - \tau_2 \in 
\RR^4  \,  ,
\ee
are well defined, unitary, and they extend $U_0$. 
By a similar computation one finds that
$U(x) U(y) = U(x + y)$ for $x,y \in \RR^4$. We put now
\be \label{e.2.6}
\fA_0(\cV_{o + x}) \coloneqq
\ad{U(x)}(\pi_0(\fA(\cV_o))) \, , \quad x \in \RR^4 \, , 
\ee
hence
$\fA_0(\cV_{o + \tau}) = \pi_0(\fA(\cV_{o + \tau}))$, $\tau \in V_+$. 
The covariant action of $U$ on these algebras follows from its definition.
Since the translations act transitively on the lightcones
in Mink\-owski space, one obtains all corresponding algebras. 
Next, let $y - x \in V_+$. There is some $\tau \in V_+$ such that
$(x + \tau), (y + \tau) \in V_+$, hence
 \begin{align} \label{e.2.7}
&  \fA_0(\cV_{o + y}) = \ad{U(\tau)^{-1}}
  \ad{U(y +\tau)} (\fA_0(\cV_o)) =
  \ad{U(\tau)^{-1}} \pi_0(\fA(\cV_{o + y + \tau})) \\
&  \subset \ad{U(\tau)^{-1}} \pi_0(\fA(\cV_{o + x + \tau}))
  = \ad{U(\tau)^{-1}} \ad{U(x +\tau)} (\fA_0(\cV_o)) =
    \fA_0(\cV_{o + x}) \, . \nonumber
 \end{align}
So the resulting family of lightcone algebras is isotonous, \ie it constitutes 
a net.

To verify the continuity and spectral properties of $U$,  
let \mbox{$A,B \in \fA(\cV_o)$} and let $x + \tau \in V_+$ for 
suitable $\tau \in V_+$. Property (a) of vacuum states implies that
\be \label{e.2.8}
\langle \pi_0(A) \Omega_0 , U(x)  \pi_0(B) \Omega_0 \rangle
=   \langle \pi_0(\alpha_\tau(A)) \Omega_0,
 \pi_0(\alpha_{\tau + x}(B)) \Omega_0 \rangle \, .
\ee
Thus, by property (c), the function
$x \mapsto \langle \pi_0(A) \Omega_0 , U(x)  \pi_0(B) \Omega_0 \rangle$
is continuous and can analytically be continued into the domain
$\RR^4 + i \, V_+$. Moreover, its modulus is bounded there
by $\| \pi_0(A) \Omega_0 \| \|  \pi_0(B) \Omega_0 \|$.  It then 
follows from standard arguments in the theory of Laplace
transforms that the spectrum of $U$ is
contained in the closed forward light cone $\overline{V}_+$.
The invariance of $\Omega_0$ under the action
of $U$ follows from equations \eqref{e.2.5} and  \eqref{e.2.3}, 
completing the proof.
\end{proof}  

This result shows that the hypothesis of a fundamental arrow
of time leads, under meaningful assumptions, to the
theoretical description of arbitrary spacetime translations,
forming a group and acting covariantly on observables 
all over Minkowski space. So that hypothesis is compatible with the 
common assumptions made in theoretical physics. However, as we
shall see in the next section, this theoretical extension of the
semigroup of time to a group is in general not
unique. Statements about the past then involve unavoidable
uncertainties.

\section{The uncertain past}
\label{sec3}
\setcounter{equation}{0}

We exhibit now possible ambiguities arising in the
extension of the semigroup of time translations to a group,
constructed in the preceding section, 
and relate them to specific properties of the energy-momentum
spectrum. The results provide answers to question (II), raised above.
In this analysis we make use of the net of local subalgebras
contained in the given lightcone algebra, 
$\cO \mapsto \fA(\cO) \subset \fA(\cV_o)$, whose
elements are assumed to  commute at spacelike distances.
In representations induced by a vacuum state, 
this net extends to a local net $\cO \mapsto \fA_0(\cO)$
on Minkowski space~$\cM$, obtained by the adjoint action of the
unitary group $U$, cf.\ equation \eqref{e.2.7}.
Let $\fR$ be the von Neumann algebra
on $\cH_0$, which is generated by this net. By a result of
Araki on vacuum representations, cf.\ \cite[Sec.~2.4]{Sa}, 
its commutant coincides with the center of 
the algebra, \ie $\fR' \subset \fR$, 
and it is pointwise invariant under the adjoint action
of $U$. Thus, by central decomposition, we may assume that
the multiples of $\Omega_0$ are the only 
$U$-invariant vectors in $\cH_0$ and that
$\fR$ coincides with the algebra $\cB(\cH_0)$
of all bounded operators on $\cH_0$. 
Moreover, the spectrum of~$U$ is a Lorentz invariant
subset of the lightcone $\overline{V}_+$ in
momentum space, cf.\  \cite{Bo1}.

Whereas the theoretical extension of the algebra $\fA(\cV_o)$ to
all of Minkowski space, obtained in this manner, comprises maximal 
information, this extension is in general not unique. 
Whenever this happens, the spectrum of $U$ fills the whole lightcone,
\ie there exist excitations of arbitrarily small mass.
In the proof of this assertion we make use of the following lemma,
where the weak closure of the lightcone algebra is denoted by
$\fR(\cV_o) \coloneqq \pi_0(\fA(\cV_o))^-$.
\begin{lemma} \label{l.3.1}
  Let $Z \in \fR(\cV_o)$ be the largest projection which annihilates
  the vacuum, $Z \, \Omega_0 = 0$. Then \ $\ad{U(x)}(Z) = Z$ \
  for all $x \in \RR^4$. 
\end{lemma}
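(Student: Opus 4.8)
The plan is to first squeeze a one-sided relation out of the maximality of $Z$, and then to promote that inequality to the full equality by invoking the relativistic spectrum condition through an analyticity argument. First I would note that for $\tau \in V_+$ the projection $\ad{U(\tau)}(Z)$ again belongs to $\fR(\cV_o)$ and annihilates the vacuum. Indeed, by the covariant action of $U$ (Proposition~\ref{p.2.1}(iii)) one has $\ad{U(\tau)}(\fR(\cV_o)) = \fR(\cV_{o+\tau}) \subseteq \fR(\cV_o)$ by isotony, and since $U$ leaves $\Omega_0$ invariant, $\ad{U(\tau)}(Z)\,\Omega_0 = U(\tau)\,Z\,\Omega_0 = 0$. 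As $Z$ is by hypothesis the largest projection in $\fR(\cV_o)$ killing $\Omega_0$, this forces $\ad{U(\tau)}(Z) \le Z$ for every $\tau \in V_+$, which I would rewrite in the operator form $(1-Z)\,U(\tau)\,Z = 0$, $\tau \in V_+$.

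The main obstacle is precisely that maximality is one-sided: it controls only the forward translations $\tau \in V_+$, while the desired equality demands symmetric control of both time directions. To break this asymmetry I would fix an interior timelike direction $e \in V_+$ and study, for fixed vectors $\phi, \psi \in \cH_0$, the scalar functions $s \mapsto \langle \phi,\, (1-Z)\,U(se)\,Z\,\psi\rangle$, which vanish for all $s \ge 0$ by the previous step. The self-adjoint generator $H_e$ of the one-parameter group $s \mapsto U(se)$ has non-negative spectrum, because the joint spectrum of $U$ lies in $\overline{V}_+$ and $e \in V_+$; hence $U(se) = e^{i\sigma H_e}\,e^{-\eta H_e}$ extends, for $s = \sigma + i\eta$ with $\eta \ge 0$, to a family of operators of norm at most one, weakly analytic for $\eta > 0$ and continuous down to the real axis (the standard Laplace-transform analyticity already used in the proof of Proposition~\ref{p.2.1}). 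Thus each of the above functions is bounded and analytic in the upper half-plane, continuous on its closure, and vanishes on the ray $[0,\infty)$; by the Schwarz reflection principle it then vanishes identically, so that $(1-Z)\,U(se)\,Z = 0$ for all real $s$.

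Writing this last relation for $s$ and for $-s$, taking the adjoint of one of them and combining, I would deduce $U(se)\,Z = Z\,U(se)$, i.e. $\ad{U(se)}(Z) = Z$ for every $s \in \RR$ and every interior timelike $e \in V_+$. Finally, since any $x \in \RR^4$ can be written as $x = \tau_1 - \tau_2$ with $\tau_1, \tau_2 \in V_+$ (as in the construction of $U$), one has $U(x) = U(\tau_1)\,U(\tau_2)^{-1}$, where both factors are values of $U$ along timelike rays and hence commute with $Z$; consequently $\ad{U(x)}(Z) = Z$ for all $x \in \RR^4$.

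I expect the analytic-continuation step to be the crux of the argument: the positivity of $H_e$ for $e \in V_+$ is exactly what lets $U(se)$ reach into the upper half-plane, and it is this positivity that converts the one-sided vanishing coming from maximality into two-sided vanishing, and thence into genuine commutation. The remaining manipulations — the maximality comparison and the passage from timelike rays to all of $\RR^4$ — are routine once this point is secured.
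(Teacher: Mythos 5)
Your proof is correct, and it reaches the conclusion by a route that differs in execution from the paper's, though both turn on the same principle (the relativistic spectrum condition upgrading one-sided information to two-sided). The first step is common to both: maximality of $Z$, covariance, isotony, and invariance of $\Omega_0$ give the one-sided relation, which the paper phrases as monotonicity $Z_y \leq Z_x$ for $y - x \in V_+$ of the translated projections $Z_x = \mathrm{ad}\,U(x)(Z)$, and you phrase as $(1-Z)\,U(\tau)\,Z = 0$, $\tau \in V_+$. From there the paper takes an algebraic shortcut: along a timelike line the translates $Z_{\tau(t)}$ are totally ordered, hence commute and generate with $1$ an abelian von Neumann algebra, and it then cites Borchers' theorem (via Sakai, Thm.\ 2.4.3) that a positive-energy unitary group acting on an abelian von Neumann algebra acts trivially on it. You instead prove directly the invariant-subspace statement you need: positivity of the generator $H_e$ along interior timelike directions makes $s \mapsto \langle\phi, (1-Z)U(se)Z\psi\rangle$ the boundary value of a function analytic in the upper half-plane, and its vanishing on $[0,\infty)$ propagates by Schwarz reflection (and the identity theorem) to all of $\RR$; adjoints then yield $[U(se), Z] = 0$, and the decomposition $x = \tau_1 - \tau_2$ finishes the argument exactly as in the construction of $U$. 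In effect you have in-lined a self-contained proof of the special case of Borchers' theorem that the paper invokes as a black box — this is the same analytic mechanism that underlies the cited theorem, so your version is more elementary and avoids forming the abelian algebra, at the cost of being longer; the paper's version is shorter but leans on the literature. All of your individual steps (the norm bound $\|e^{izH_e}\|\le 1$ for $\mathrm{Im}\,z \ge 0$, the reflection argument, the passage from rays to $\RR^4$) are sound.
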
  
\begin{proof} Let $x \in \RR^4$. Because of the
  covariant action of $U$ on the lightcone algebras,
  the operator $Z_x \coloneqq \ad{U(x)}(Z)$ is the largest projection in
  $\fR(\cV_{o + x})$ which annihilates $\Omega_0$.
  If $(y - x) \in V_+$ it follows that
  $\fR(\cV_{o + y}) \subset \fR(\cV_{o + x})$ and consequently
  $Z_y \leq Z_x$. 

  We pick now a one-parameter family of time translations
  $t \mapsto \tau(t) = (t, t \bv)$ for fixed $\bv$. 
  By the preceding argument, the projections
  $Z_{\tau(t)} = \ad{U(\tau(t))}(Z)$, $t \in \RR$, commute and generate,
  together with $1$, an
  abelian von Neumann algebra. Because of the spectral properties
  of $U$, this implies by a theorem of
  Borchers that $Z_{\tau(t)} = Z$, $t \in \RR$, for any
  admissible choice of $\bv$, cf.\ \cite[Thm.\ 2.4.3]{Sa}.
  The statement then follows. 
\end{proof}

\vspace*{-3mm}
The projection $Z$ in the preceding proposition encodes
information about the degree of ambiguity involved in the
extension of time translations to the past.
Depending on the underlying theory, there occur the
following possibilities. \\[2mm]
(a) \ $Z = (1 - P_{\Omega_0})$, where $ P_{\Omega_0}$ is the one-dimensional
projection onto the vacuum vector. Since $1-Z$ is an element of 
$\fR(\cV_0)$, this algebra coincides with $\cB(\cH_0)$,
the unitaries $U \in \cB(\cH_0)$ are uniquely fixed, 
and all lightcone algebras coincide.
This possibility occurs in theories where the
spectrum of $U$ has a gap between the point $0$, corresponding
to the vacuum, and the rest of the spectrum \cite{SaWo}.  \\[2mm]
(b) \ $ 0 < Z < (1 - P_{\Omega_0})$. Then $\fR(\cV_o)$ is
a proper subalgebra of $\cB(\cH_0)$ and thus has a non-trivial
commutant $\fR(\cV_o)'$. Disregarding multiples of $1$,
the operators in $\fR(\cV_o)'$ do not commute with the
translations $U$; otherwise, $\Omega_0$ 
would not be unique. Now let $W \in \fR(\cV_o)'$ be unitary. 
Then $x \mapsto U_W(x) \coloneqq \ad{W}(U(x))$
is another unitary representation of the
translations~$\RR^4$ whose adjoint action on $\fR(\cV_o)$ for
time translations $\tau \in V_+$ coincides with those of the
unique initial $U_0$. More generally,
one can modify $U$ by cocycles with values in $\fR(\cV_o)'$.
It shows that the past and spacelike complement $\cN_o$ of the spacetime 
point $o$ can be described in different ways without modifying any 
observations in the future $\cV_o$. \\[2mm]
(c) \ $Z = 0$. Then $\Omega_0$ is cyclic and separating for
$\fR(\cV_o)$. By modular theory, its 
commutant $\fR(\cV_o)'$ is anti-isomorphic to $\fR(\cV_o)$.
As in (b), $U$ is not fixed, which has the same consequences.
This special case occurs in asymptotically complete theories describing
exclusively massless particles, cf.\ \cite[Prop.\ 4.2]{Bu2}. 

Thus, apart from case (a), one is faced with  
theoretical uncertainties involved in back-calculations to
the past. As was mentioned, such 
uncertainties appear in theories of massless
particles. We show next that in cases
(b) and (c) the spectrum of 
$U$ never has a gap between the vacuum and
the excited states.
In the proof we rely on a
fundamental result by Borchers on
modular inclusions \cite{Bo2, Fl}.

\begin{proposition}
  If the extension $U$ of the unitary time translations
  $U_0$ on $\fA_0(\cV_o)$ is not unique, its spectrum 
  consists of the closed cone $\overline{V}_+$.  In
  particular, it contains contributions with arbitrarily
  small mass. 
\end{proposition}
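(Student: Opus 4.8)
The plan is to show that non-uniqueness forces the energy--momentum spectrum $S := \mathrm{spec}(U)$ to be invariant under dilations, after which the Lorentz invariance already recorded above pins it down. By the trichotomy preceding the statement, non-uniqueness means we are in case (b) or (c), so $Z < (1 - P_{\Omega_0})$. First I would reduce to a situation in which $\Omega_0$ is cyclic and separating. By Lemma~\ref{l.3.1} the projection $Z$, hence also $Q := 1 - Z$, commutes with every $U(x)$, so $\cH_0 = Q\cH_0 \oplus Z\cH_0$ reduces the representation. Now $Q$ is exactly the support of $\Omega_0$ in $\fR(\cV_o)$ (the complement of the largest annihilating projection), and $\Omega_0$ is cyclic for $\fR(\cV_o)$ by Proposition~\ref{p.2.1}(i); standard reduction theory then shows that $\Omega_0$ is cyclic \emph{and} separating for the reduced algebra $\fR(\cV_o)_Q$ on $Q\cH_0$. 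Since $\mathrm{spec}(U) \subseteq \overline{V}_+$ already holds by Proposition~\ref{p.2.1}(iii), it suffices to prove that $U \rest Q\cH_0$ has spectrum $\overline{V}_+$.

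Next I would invoke Borchers' theorem on modular inclusions. For each fixed $\bv$ the time translations $t \mapsto U((t, t\bv))$, $t \geq 0$, form a one-parameter semigroup with positive generator (the spectrum condition), they leave $\Omega_0$ invariant, and by isotony they map $\fR(\cV_o)_Q$ into itself --- precisely a half-sided modular inclusion. Borchers' commutation theorem \cite{Bo2, Fl} then yields, with $\Delta$ the modular operator of $(\fR(\cV_o)_Q, \Omega_0)$,
\[
\Delta^{it}\, U(\tau)\, \Delta^{-it} = U(e^{-2\pi t}\tau), \qquad \tau \in V_+,\ t \in \RR .
\]
The decisive point is that the \emph{same} $\Delta$ appears for every direction. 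Writing an arbitrary $a \in \RR^4$ as a difference $\tau_1 - \tau_2$ of elements of $V_+$ and using $U(a) = U(\tau_1)U(\tau_2)^{-1}$, the relation extends to $\Delta^{it} U(a) \Delta^{-it} = U(e^{-2\pi t} a)$ for all $a \in \RR^4$; thus the modular group dilates the entire translation group.

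Passing to the joint spectral measure of $U$, this relation says that $S$ is invariant under all scalings $p \mapsto e^{2\pi t} p$. Combined with the facts recorded above --- $S$ is closed, Lorentz invariant, contained in $\overline{V}_+$, and contains $0$ --- the only possibilities are $S = \{0\}$, $S = \partial V_+$, or $S = \overline{V}_+$. The first is excluded because non-triviality of $\fR(\cV_o)'$ gives $Q\cH_0 \supsetneq \CC\,\Omega_0$, so an isolated $0$ would make $U \rest Q\cH_0$ trivial, contradicting the uniqueness of the invariant vector $\Omega_0$. To exclude $S = \partial V_+$ I would use the local subnet: in that case $S$ contains the whole lightlike shell, hence non-collinear lightlike vectors, whose sum is timelike; the standard additivity of the energy--momentum spectrum for the local net $\cO \mapsto \fA_0(\cO)$ then forces this interior point into $S$, a contradiction. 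Therefore $S = \overline{V}_+$, and in particular excitations of arbitrarily small mass occur.

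I expect two obstacles. The first is the reduction in case (b): one must check that after cutting down by the $U$-invariant support projection $Q$ the hypotheses of Borchers' theorem genuinely survive --- that $\Omega_0$ is simultaneously cyclic and separating for $\fR(\cV_o)_Q$ and that the translated algebras remain contained in it --- which is exactly where the $U$-invariance of $Z$ from Lemma~\ref{l.3.1} is indispensable. The second, more delicate, point is the final step: ruling out a spectrum supported purely on the boundary $\partial V_+$ cannot be done by dilation and Lorentz invariance alone and requires genuine use of the local substructure of $\fA(\cV_o)$ through additivity, rather than the lightcone algebra by itself.
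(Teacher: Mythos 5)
Your proposal follows essentially the same route as the paper's proof: restrict to the corner algebra $(1-Z)\,\fR(\cV_o)\,(1-Z)$, for which $\Omega_0$ is cyclic and separating, apply Borchers' theorem on half-sided modular inclusions to get dilation covariance of the translations, and combine the resulting scale invariance of the spectrum with its Lorentz invariance. The only divergence is at the very end, where the paper concludes directly from Lorentz invariance without comment, while your explicit exclusion of a purely lightlike spectrum $\partial V_+$ via additivity of the energy--momentum spectrum of the local net $\cO \mapsto \fA_0(\cO)$ spells out a step the paper leaves implicit.
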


\vspace*{-4mm}
\begin{proof} As was shown above,
  the extension $U$ is not unique iff 
  $(1 - Z) > P_{\Omega_0}$.
  By the definition of $Z$, any positive operator
  $A \in \fR(\cV_o)$ that annihilates $\Omega_0$ is
  dominated by a multiple of $Z$, \ie $ 0 \leq A \leq \|A \| \, Z$. 
  Hence $\Omega_0$ is separating for   
  the algebra $\fS(\cV_o) \coloneqq (1-Z) \, \fR(\cV_0) \, (1 - Z)
  \subset \fR(\cV_o)$. It is also clear that 
  $\fS(\cV_o) \, \Omega_0$ is dense in $(1 - Z) \cH_0$.
  Let $\Delta_0$ be the modular operator
  on $(1 - Z) \cH_0$ fixed by $(\fS(\cV_o), \Omega_0)$. 
  Since $U$ commutes with $Z$, it leaves $(1 - Z)\cH_0$
  invariant and 
  $\ad{U(\tau)}(\fS(\cV_o)) \subset \fS(\cV_o)$ for
  $\tau \in V_+$. So in view of the spectral properties
  of $U$ one obtains the equality, cf. \cite{Bo2,Fl}, 
  \be \label{e.3.1}
  \ad{\Delta_0^{i\sigma}}(U(x)) = U(e^{-2 \pi \sigma} x) \, , \quad
  \sigma \in \RR \, , \ x \in \RR^4 \, .
  \ee 
  Hence the spectrum of $U$ on $(1 - Z) \cH_0 $ is
  invariant under scale transformations. Since 
  $( 1 - Z) \cH_0$ contains, apart from multiples of $\Omega_0$, only 
  vectors which are not invariant under the action of $U$, 
  the spectrum of $U$ includes a ray. So 
  the statement follows from the fact
  that the spectrum is Lorentz invariant.  
  \end{proof}  

\vspace*{-8mm}
\section{Loss of quantum information}
\label{sec4}
\setcounter{equation}{0}
As we have seen, the theoretical uncertainties about the past 
manifest themselves in specific spectral properties of
the unitaries $U$ on the subspace $(1-Z) \cH_0$.
We will now answer question (III) above and 
determine the corresponding losses of information about these 
states over time,
disregarding the vacuum state $\Omega_0$ which is stationary. 
Here we rely on definitions and results in \cite{CiLoRaRu}.

For the analysis of the states in $(1-Z) \cH_0$ it suffices to consider
the subalgebras 
$\fS(\cV_{o + \tau}) \coloneqq
(1-Z) \, \fR(\cV_{o + \tau}) \, (1-Z)$, $\tau \in V_+$.
Putting $\cK_{0} \coloneqq (1 - Z - P_{\Omega_0}) \cH_0$,
one proceeds to the net of closed, real linear subspaces
\be  \label{e.4.1}
\cL_\tau \coloneqq \{ (A - \omega_0(A)1) \Omega_0 : A = A^* \in
\fS(\cV_{o + \tau}) \}^- \subset \cK_{0} \, , \quad \tau \in V_+ \, .
\ee
Since $\Omega_0$ is cyclic and separating for $\fS(\cV_{o + \tau})$,
they are standard subspaces, 
\be \label{e.4.2}
\cL_\tau \cap \, i \, \cL_\tau  = \{ 0 \} \, , \qquad
(\cL_\tau + i \, \cL_\tau)^- = \cK_0 \, .
\ee
  Let $\Delta_0$ be the modular operator determined by
  the initial standard space $\cL_0$. It coincides with the
  restriction of the modular operator fixed by
  $(\fS_0(\cV_o), \Omega_0)$ to $\cK_0$. Equation \eqref{e.3.1}
  implies that \ 
  $\Delta_0^{-i \sigma} \cL_\tau = \cL_{e^{2 \pi \sigma} \tau}  \subset \cL_\tau$
  for $\sigma \geq 0$. Thus the inclusions $\cL_\tau \subset \cL_0$
  are half-sided modular.  
  Since $U(\sigma) \cL_\tau = \cL_{\tau + \sigma}$, $\sigma, \tau \in V_+$, 
it also follows from the spectral properties of $U$ that  $\cL_\tau$
has no non-trivial element in common with its symplectic
complement~$\cL_\tau'$, 
\be \label{e.4.3}
\cL_\tau \cap \cL_\tau' = \{0\} \, , \quad \tau \in V_+ \, .
\ee
\indent We follow now the discussion in \cite{CiLoRaRu} and define the
(real linear, unbounded) cutting projections
\be \label{e.4.4}
P_\tau : \cL_\tau + \cL_\tau' \rightarrow \cL_\tau \, , \quad \tau \in V_+ \, .
\ee
The modular operator determined by the standard subspace $\cL_\tau$
is denoted by $\Delta_\tau$. 
Given any vector state $\Phi \in (1 - Z) \cH_0$, we
proceed to
$\Phi^\perp \coloneqq (1 - P_{\Omega_0}) \, \Phi \in \cK_0$ and put 
\be \label{e.4.5}
I_\tau(\Phi) \coloneqq \text{Im} \, \langle \Phi^\perp, P_\tau \, i \,
\ln{\Delta_\tau} \,
\Phi^\perp \rangle \, , \quad \tau \in V_+ \, .
\ee
This quantity is interpreted as information which can be extracted
from $\Phi$ by measurements with observables
in $\fS(\cV_\tau)$, $\tau \in V_+$; the information contained in the
stationary state $\Omega_0$ is put equal to $0$. As has been
shown in \cite{CiLoRaRu}, this interpretation is related to
the concept of relative entropy between $\Phi$ and 
$\Omega_0$, invented by Araki \cite{Ar}. The following
result, established in \cite{CiLoRaRu}, 
describes the information on~$\Phi$ in the course
of time.
\begin{proposition}
  Let $\Phi \in (1 - P_{\Omega_0}) \cH_0$ be a vector state. \\[-9mm]
  \begin{itemize}
  \item[(a)] $I_\tau(\Phi) \in [0,\infty]$ and there is a dense
    set of vectors $\Phi$ for which this quantity is finite,
    $\tau \in V_+$. 

  \vspace*{-2mm}
  \item[(b)] $I_\tau(\Phi) \leq I_\sigma(\Phi)$ if 
    $(\tau - \sigma) \in V_+$. 

  \vspace*{-2mm}  
  \item[(c)] Let $t \mapsto \tau(t) \coloneqq (t, t \bv)$ for
    fixed $\bv$, $|\bv| < 1$. If $I_{\tau(t_0)}(\Phi)$ is finite,
    $t_0 \geq 0$, then $t \mapsto I_{\tau(t)}(\Phi)$ is
    continuous for $t \geq t_0$, decreases monotonically, and is convex.
   \end{itemize}  
\end{proposition}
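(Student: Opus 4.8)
The plan is to recognize $I_\tau(\Phi)$ as the entropy of the vector $\Phi^\perp$ relative to the standard subspace $\cL_\tau$ in the sense of \cite{CiLoRaRu}, and then to read off (a)--(c) from the modular geometry assembled in the previous section. All the structural input is already available: the $\cL_\tau$ form a decreasing family of standard subspaces of $\cK_0$ with $\cL_\tau \cap \cL_\tau' = \{0\}$, so that the cutting projections $P_\tau$ are densely defined; the inclusions $\cL_\tau \subset \cL_\sigma$ for $\tau - \sigma \in V_+$ are half-sided modular; and the translations satisfy $U(\sigma)\cL_\tau = \cL_{\tau + \sigma}$ together with the scaling relation \eqref{e.3.1}, $\ad{\Delta_0^{i\sigma}}(U(x)) = U(e^{-2\pi\sigma} x)$, while the generator of $U$ has spectrum in $\overline{V}_+$. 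Everything below exploits these facts; the task is to verify the hypotheses of the general theorems of \cite{CiLoRaRu} in this geometric setting and transcribe their conclusions.

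For (a), I would first obtain positivity from standardness: since $\cL_\tau$ is standard with $\cL_\tau \cap \cL_\tau' = \{0\}$, the real-linear operator $P_\tau\, i \ln \Delta_\tau$ defines a nonnegative quadratic form on its domain, as in \cite{CiLoRaRu}. Concretely, writing the Tomita operator of $\cL_\tau$ in its polar form $S_\tau = J_\tau \Delta_\tau^{1/2}$, one checks directly that $\mathrm{Im}\,\langle \Phi^\perp, P_\tau\, i \ln \Delta_\tau\, \Phi^\perp\rangle \geq 0$. For finiteness on a dense set I would take $\Phi^\perp$ in the domain of $\ln \Delta_\tau$, equivalently vectors $(A - \omega_0(A))\Omega_0$ with $A \in \fS(\cV_{o + \tau})$ analytic for the modular flow; these form a core and are dense in $\cK_0$, and on them the form is manifestly finite.

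For (b), note that $\tau - \sigma \in V_+$ gives $\cL_\tau \subset \cL_\sigma$, and the standard-subspace entropy is monotone under inclusion. This is the central estimate of \cite{CiLoRaRu}: its proof identifies $I_\tau(\Phi)$ with (a multiple of) the Araki relative entropy between the state induced by $\Phi^\perp$ and $\Omega_0$ on the second-quantized algebra attached to $\cL_\tau$, and then invokes monotonicity of relative entropy under restriction to the subalgebra associated with the inclusion $\fS(\cV_{o + \tau}) \subset \fS(\cV_{o + \sigma})$. I would verify standardness and the half-sided property and quote that monotonicity.

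For (c) I restrict to $\tau(t) = (t, t\bv)$. Monotone decrease is the specialization of (b) along this ray. Continuity and convexity are the substantive points. Using $U(\tau(t))\cL_0 = \cL_{\tau(t)}$ together with \eqref{e.3.1}, the family $\{\cL_{\tau(t)}\}$ is exactly a half-sided modular flow, so $t \mapsto I_{\tau(t)}(\Phi)$ is governed by the modular Hamiltonian of $\cL_0$ and the positive generator $G$ of $t \mapsto U(\tau(t))$. I expect convexity, $\partial_t^2\, I_{\tau(t)}(\Phi) \geq 0$, to be the main obstacle: it is the quantum null energy / half-sided modular inequality of \cite{CiLoRaRu}, controlling the second variation of the entropy along the modular flow, and it relies essentially on the positivity of $G$. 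My plan is to reduce convexity to that positivity via the differential sum rule for the entropy established in \cite{CiLoRaRu}, rather than attempting a direct computation of the second derivative of the unbounded form $P_{\tau(t)}\, i \ln \Delta_{\tau(t)}$. Once convexity is in hand, continuity on $(t_0, \infty)$ is automatic, and right-continuity at the endpoint $t_0$ follows by combining the lower semicontinuity of the entropy with the monotone decrease.
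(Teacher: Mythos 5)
Your proposal takes essentially the same route as the paper: the paper gives no independent proof, but states the proposition as a result established in \cite{CiLoRaRu}, applicable because Section~4 has already verified exactly the hypotheses you list (standardness of the $\cL_\tau$, the condition $\cL_\tau \cap \cL_\tau' = \{0\}$, the half-sided modular inclusions $\cL_\tau \subset \cL_0$, and translation covariance with spectrum in $\overline{V}_+$). Your additional sketches of the internal arguments of \cite{CiLoRaRu} (identification with Araki relative entropy, monotonicity under inclusion, convexity from positivity of the generator) elaborate on, but do not depart from, that same reduction.
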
  

The loss of information over time, described in this
proposition, can be understood in simple terms  
in the presence of massless single particle states 
in $\cH_0$, such as the photon. There then exist corresponding 
outgoing scattering states of massless particles
in $\cH_0$ and corresponding outgoing fields. 
Denoting by~$\fA_0^\text{out}(\cV_o)$ the algebra
generated by outgoing fields that are created in 
$\cV_o$ and, similarly, by~$\fA_0^\text{out}(\cV_{o \, -})$
the algebra generated by outgoing fields that were created in the past
cone~$\cV_{o \, -}$, one has the inclusions \cite{Bu1}
\be
\fA_0^\text{out}(\cV_o) \subset \fR(\cV_o) \subset
\fA_0^\text{out}(\cV_{o \, -})' \, .
\ee
Whereas the first inclusion holds also for 
outgoing fields of massive particles in~$\cH_0$, the second
one is a consequence of locality and the fact that massless
particles propagate with the speed of light.
Thus the loss of information can be visualized 
in this case geometrically. It is a consequence of Huygens's principle
according to which the outgoing massless particles in a state,  which
were created in the past cone $\cV_{o \, -}$, will miss the future
cone $\cV_{o}$ and thus leave no observable effects there. 
It is noteworthy that by this mechanism the notorious
infrared problems in Minkowski space, caused by infinite clouds of
massless particles, disappear for observers in lightcones
\cite{BuRo}. 

\section{The arrow of time as origin of quantization}
\label{sec5}
\setcounter{equation}{0}
As we have seen,
the hypothesis of an intrinsic arrow of time is compatible with the
standard theoretical treatment of spacetime translations as a group.
However, if one is located at some spacetime point $o$,
there arise ambiguities about the action of these translations  
and the properties of the
underlying physical system in the non-accessible
past and spacelike complement $\cN_o$. They are due to 
the presence of massless
particles. In other words, one never has perfect control on
the initial data of states which would be needed for an 
exact prediction of the results of future measurements
in~$\cV_o$. The best
one can hope for are statistical statements.
They must be  based on informations, collected in 
the past and typically  formulated in classical terms,
including quantum features that have been recorded.
This leads us to the last question (IV), namely, whether the arrow
of time entails the quantum properties of operations 
that lie ahead, but are described by classical concepts. 

That this is a meaningful idea has been expounded in recent work
\cite{BuFr2,BrDuFrRe1}. It is based on a set of operations which
comply with a specific version of the
causality principle: namely, the effects of an operation in a
given space-time region become visible exclusively in its future.
As first observed by Sorkin \cite{Sorkin},
this is a strong restriction on possible operations, and its relevance
for the measurement process in relativistic quantum physics was
recently analyzed in \cite{Bostelmann, Fewster, FewsterVerch}.
We determine here, with a simple example, a consistent
choice of operations. These operations are characterized by concepts
of classical field theory. They are conceived to
describe perturbations which are caused by
adding interaction terms to a given Lagrangian. 
As we shall see, the causal constraints,
\ie the arrow of time, imply that the operations generate a non-commutative
group which, together with the dynamical constraints, leads
to the formalism of quantum field theory. 

We consider a scalar field which propagates in Minkowski space $\cM$.
Its classical configurations are real, smooth functions $x \mapsto \phi(x)$.
The dynamics of the field is described in classical terms as well
and given by relativistic Lagrangians. 
We treat here the simple case of a non-interacting field
with Lagrangian density 
\be \label{e.5.1} 
x \mapsto L(x)[\phi] = (1/2)(\partial_\mu \phi(x) \partial^\mu \phi(x)
- m^2 \phi(x)^2) \, .
\ee
Here $\partial_\mu$ is the partial derivative with regard to the
$\mu$-component of $x$ and \mbox{$m \geq 0$} is the mass of the field.
  The  variations of the corresponding  action 
  are given for real, smooth functions with compact
  support, $\phi_0 \in \cD(\RR^4)$, by
\begin{align} \label{e.5.2} 
\phi & \mapsto \delta L(\phi_0)[\phi] \coloneqq
\int \! dx \, \big( L(x)[\phi + \phi_0] - L(x)[\phi] \big) \\
& = (1/2) \int \! dx \,
\big( \partial_\mu \phi_0(x) \partial^\mu \phi_0(x) - m^2 \phi_0(x)^2 \big) 
- \int \! dx \, \big( \square \phi_0(x) + m^2 \phi_0(x) \big) \phi(x) \, ,
\nonumber 
\end{align}
where $\square$ is the d'Alembertian. It is a special functional on the fields
of the specific form
\be \label{e.5.3}
\phi \mapsto F[\phi]= c + \int \! dx \, f(x) \phi(x) \, , \quad c \in \RR \, ,
\ f \in \cD(\RR^4) \, .
\ee
These functionals are regarded as perturbations
of the dynamics. They arise by adding
to the Lagrangian~\eqref{e.5.1}
a c-number function $x \mapsto c(x)$, which
integrates to $c$, and a 
term $x \mapsto f(x) \phi(x)$ which
is linear in the field. We restrict our attention 
to pertubations of this simple form, cf.\ \cite{BuFr2} for more
general examples. The support of a functional $F$ in Minkowski
space is identified with the support of
the underlying test function $f$. The constant functional
$\phi \mapsto c[\phi] = c$ has empty support and can be 
assigned to any spacetime region. We note in conclusion that
the functions $\phi_0 \in \cD(\RR^4)$, appearing in the variations of
the action, induce shifts of the functionals.
They are denoted 
by~$\, \phi \mapsto F^{\phi_0}[\phi] \coloneqq F[\phi + \phi_0]$.

  After this outline of the
  classical input, we consider now operations which
  are labeled by the functionals $F$. 
    The symbols $S_0(F)$ denote operations, determined by $F$,
    in presence of the unperturbed dynamics. They are
    conceived to describe
    perturbations which are caused by a local change of the
    dynamics through $F$.  Similarly, the symbols $S_G(F)$ denote
    the same operations in presence of the dynamics changed by $G$. 
  According to this interpretation,
  the products of these operations (their composition)
  are assumed to satisfy for functionals
  $F, G, H$ the relation
  \be  \label{e.5.4}
  S_H(G) S_{G + H}(F) = S_H(F + G) \, , \quad S_H(0) = 1 \, .
  \ee
  It follows that the operations have an inverse,
  $S_H(G)^{-1} = S_{H+G}(-G)$. Moreover, for any choice of
  $F,G$, one has the relation $S_G(F) = S_0(G)^{-1} S_0(F +G)$,
  known as Bogoliubov-formula \cite{BoSh}.
  
  The operation $S_G(F)$ is assumed to be localized in Minkowski space
  in the support region of $F$, irrespective of the choice of $G$.
  In order to express the causality properties of these operations,
  we must compare the supports of the underlying functionals.
  We write $G \succ F$
  if $G$ is later than $F$, \ie there is some Cauchy surface
  such that $\text{supp} \, G$ lies above and $\text{supp} \, F$
  beneath it. According to the causality condition, indicated
  above, the operation $S_{G + H}(F)$ does not depend on the choice
  of the functional $H$ if it is later than $F$, \ie
  \be   \label{e.5.5}
            S_{G + H}(F) = S_G(F) \quad \text{if} \quad H \succ F \, .
  \ee

  The preceding relations suggest to consider
  the group $\cG_0$ generated by the operations $S_0(F)$ for the chosen
  Lagrangian \eqref{e.5.1}. It is characterized by
  the following three relations.
  \be    \label{e.5.6}
          S_0(F) S_0(G) = S_0(F + G) \quad \text{if} \quad F \succ G \, .
  \ee
  In this factorization condition the arrow of time enters.
  Note that the product
  of operations is not commutative, the causal order of the
  functionals matters. 
  But the relation implies that the constant
  functionals $\phi \mapsto c[\phi] = c$ determine elements
  $S(c)$ of the center of $\cG_0$. They satisfy 
  $S(c_1) S(c_2) = S(c_1 + c_2)$. Choosing some scale
  factor, one fixes these central operations and puts    
  \be  \label{e.5.7}
        S_0(c) = e^{ic} \, 1 \, , \quad c \in \RR \, . 
  \ee 
We will see later that the chosen scale factor is related to Planck's constant.

The dynamics induced by the Lagrangian $L$ imposes further
relations on the operations, put forward in \cite{BuFr2}.
They are given for  functionals $F$ and test
functions $\phi_0 \in \cD(\RR^4)$ by
\begin{equation} \label{e.5.8}
S_0(F)=S_0(F^{\phi_0}+\delta L(\phi_0)) \, .
\end{equation}
These relations correspond to exponentiated versions
of the Schwinger-Dyson equation obtained in
the algebraic approach to perturbation theory. 

By standard arguments one can proceed from the group $\cG_0$ of
operations to a C*-algebra $\fA$. It is generated by a net of
local subalgebras on Minkowski space which is
determined by operations having support in
the corresponding spacetime regions. The
algebra $\fA$ satisfies all axioms of local quantum physics, which were used in
the preceding sections, cf.\  \cite{BuFr2}.
This feature is a consequence of the arrow of time that
is incorporated into the factorization condition for operations.

In the example considered here, this assertion can be established by a
straightforward computation, based on the three 
defining equations of the group $\cG_0$ \cite{BuFr2}. One considers for
arbitrary test functions $f \in \cD(\RR^4)$ the functionals 
\be \label{e.5.9}
\phi \mapsto F_W(f)[\phi] \coloneqq
(1/2) \int \! dx dy \, f (x) \Delta _D(x - y) f (y) +
 \int \! dz \, f (z) \phi(z) \, , 
\ee
where $\Delta_D \coloneqq (1/2) (\Delta_R + \Delta_A)$
is the mean of the retarded and advanced solutions of the
Klein-Gordon equation with mass $m$. Putting
$W(f) \coloneqq S(F_W(f))$, the following relations are obtained 
for arbitrary test functions $f_1, f_2, f_3 \in \cD(\RR^4)$:
\begin{align} \label{e.5.10}
&  W(f_1) W(f_2) = e^{-(i/2) \int \! dx dy \, f_1(x) \Delta(x-y) f_2(y)}
    W(f_1 + f_2) \, , \nonumber \\
&  W((\square + m^2) f_3) = 1 \, , 
\end{align} 
where 
$\Delta \coloneqq (\Delta_R - \Delta_A)$ (Pauli-Jordan function).
Thus the operators $W(f)$ are exponentials of a
real, scalar, local quantum field of mass $m$
that satisfies the Klein-Gordon
equation and is integrated with test functions $f$ (Weyl operators).
The exponent of the phase factor in
\eqref{e.5.10} reveals that the scale chosen in
equation~\eqref{e.5.7} amounts to putting Planck's constant
equal to $1$.
Note that these equations are obtained without having
imposed any quantization rules from the outset. They emerge
from the constraints imposed by the arrow of~time.

\section{Summary}
\label{sec6}
\setcounter{equation}{0}

In the present article we have examined the
consequences of the hypothesis that the arrow of
time is a fundamental fact which enters in the evolution of
all systems.  There is no return to the past. We have clarified in
a first step how the assumption that time forms
a semigroup is related to the standard description of spacetime
transformations, forming a group.
That the latter description is consistent with the present input 
relies on the empirical fact
that experiments can be repeated, \ie one can prepare the same 
state many times. This suggests that there is some stationary
background. We have discussed
here the case of a vacuum state. As we have seen, its properties imply
that the semigroup of time translations can be unitarily implemented
in the corresponding GNS-representation. One can then proceed to
a unitary group of all spacetime translations. It allows one to move
theoretically backwards in time. Let us mention as an aside that
similar results obtain if one proceeds from a thermal
equilibrium state that satisfies the KMS~condition.

Next, we studied the question whether the extension of
the semigroup of time translations to the group of spacetime translations
is unique. It turned out that in general there arise ambiguities.
Namely, given a future lightcone, where the evolution of
operations and measurements is described by a given semigroup of
time translations, there can exist different extensions of this
semigroup which describe different dynamics in the past.
As a result, past properties of states can  not be 
reconstructed with certainty in this case. We have also
seen that if such ambiguities occur,
there exist states describing excitations of arbitrarily small mass. 

In order to clarify whether these states are responsible for
the loss of control of past properties, we made use of
a novel quantity, introduced in \cite{CiLoRaRu}. It measures the
quantum information contained in a state, relative to the vacuum.
We have shown that the information in the states
of interest here decreases
monotonically with time. Alternatively, one may
speak of an  increase of entropy. The underlying
excitations escape continuously into the non-accessible part
of Minkowski space. In case of massless particles in the
vacuum sector, this is known to be a consequence of Huygens's principle; 
but there may well exist other entities with this property.
These excitations cause dissipation
and an inevitable loss of quantum information.
What remains accessible over time
are material systems. They can carry along
information which may be expressed in classical terms
(ordinary language).
In order to exhibit their quantum features one needs to 
perform renewed operations, which will produce again
the transient excitations. 

These points were complemented in a final step by a survey
on recent results in \cite{BuFr2}, where the arrow of time 
was shown to be a source of quantization.
Given a system, its properties are described in classical terms,
which may be thought of as being based on informations
obtained in the past. One then considers
localized operations which are described by
functionals on the trajectories of the underlying
classical system. They are interpreted as  
perturbations caused by local changes of the dynamics.
There are two fundamental relations between these operations.
The first one describes the net effect of successive
operations. There the arrow of time enters.
The second relation involves the dynamics in form
of a Lagrangian. These relations determine a dynamical
group. No quantization rules were assumed from
the outset. Nevertheless these
ingredients determine concrete algebras which fit into the framework
of local quantum physics, cf.\ \cite{BuFr2, BrDuFrRe1}. 
It is worth  \mbox{mentioning} that a similar approach works also in 
case of non-relativistic quantum mechanics~\cite{BuFr3}. 

In contrast to the standard approach to
quantum physics, where the observables are in focus, the basic
ingredients are here the operations on the underlying
system. Such operations, described by unitary
operators, can be used as a substitute for observables.
As has been shown in \cite{BuSt}, they allow to 
determine with arbitrary precision
the expectation values of given basic observables
(projections) in given subspaces of states. Moreover,
after their action the states are elements of the
corresponding spectral subspaces, there is no 
collapse of wave functions. For this reason, these operations
were called primitive observables in  \cite{BuSt}. 

The present results are  surprisingly close to the view of 
Niels Bohr, who has argued that observations must be described
in ordinary language supplemented with classical physical
concepts \cite[p.\ 124]{Fa}.
What Bohr did not know at his time is the fact that their
quantum features can be traced to the arrow of time.

So, in summary, we come to the conclusion that the hypothesis
of an intrinsic arrow of time, inherent in all systems, 
is meaningful. It needs no justification by the Second Law.
As a matter of fact, it implies the increase of entropy
(loss of information) over
time, as we have seen in a simple example. The initial 
problem, that this hypothesis is in conflict with the
efficient theoretical usage of the group of spacetime translations
has a surprisingly simple solution. But this solution also reveals 
that the standard theoretical treatment is to some
extent ambiguous.

There remain, however, several questions.
Among them is the description of events that can be
regarded as secured information, cf.\
\mbox{\cite[VII,3]{Ha}} and \cite{BlFrSch}.
From our present point of view, these events provide the
basis for the formulation of future operations.
Another problem is the discussion of the arrow
of time in curved backgrounds. There the evolution of systems
can be described by a principle of local covariance \cite{BrFrVe}.
Because of the lack of stationary states it is, however, less clear how
to reconstruct from data in lightcones a consistent picture of the past.
Thus, in view of the present results, it seems \mbox{worthwhile} to take a
fresh look at the foundations of quantum physics, based on this
new~paradigm.

\noindent 
\textbf{\Large Acknowledgments}

\vspace*{1mm} \noindent 
DB is grateful to Dorothea Bahns and the Mathematics Institute
of the University of G\"ottingen for their continuing hospitality. 

\vspace*{-4mm}

\end{document}